\documentclass[11pt]{article}

\usepackage{amssymb,latexsym}
\usepackage{amsmath}
\usepackage{amsfonts}
\usepackage{amsopn}
\usepackage{amsthm}
\usepackage{epsfig}
\usepackage{lscape}
\usepackage{multirow}
\usepackage{url}
\usepackage[bookmarks=true]{hyperref}
\usepackage{pgf,tikz}
\usepackage{float}
\usepackage{enumerate}
\usetikzlibrary{arrows}
\textwidth172mm \textheight230mm \oddsidemargin-1mm
\evensidemargin\oddsidemargin \topmargin-15mm

\newcounter{NN}
\setcounter{NN}{0}

\newtheorem{remark}[NN]{Remark}
\newtheorem{proposition}[NN]{Proposition}
\newtheorem{theorem}[NN]{Theorem}
\newtheorem{corollary}[NN]{Corollary}

\def\Z{\mathbb{Z}}

\def\Z{\mathbb{Z}}

\bibliographystyle{plain}
\begin{document}
\title{Linear degree  growth in  lattice equations}
\author{ Dinh T. Tran and  John A.~G. Roberts\\
School of Mathematics and Statistics,
University of New South Wales, Sydney 2052 Australia
}
\maketitle
\begin{abstract}
We  conjecture recurrence relations satisfied by the degrees of some linearizable lattice equations. This helps  to prove linear growth of these equations.
We then use these recurrences to search for lattice equations that have linear growth and hence are linearizable.
\end{abstract}

\section{Introduction}
Discrete integrable systems have been a topic of many studies for the last two decades. One characteristic of discrete integrable systems is low complexity.
Complexity of a map or an equation can be measured through the so-called algebraic entropy. It has been used as an integrability detector \cite{Halburd09HowtoDetect, Kamp_growth, Tremblay,  Viallet, Viallet2015}. It is believed that if a discrete map or lattice equation has  vanishing entropy, i.e. degrees of iterations of the map (or equation) in terms of initial variables  grow polynomially, then it is integrable. On the other hand, most discrete integrable maps or lattice equations have vanishing entropy.

 Algebraic entropy of  a map or an equation is often calculated   as follows. One can introduce projective coordinates for each variable and write a map or an equation as a rule in those coordinates. By looking at the actual degrees (degrees after canceling all common factors or gcd) at each iteration, we obtain the degree sequence of the rule. The next step is to find a generating function for this sequence and hence we can calculate algebraic entropy for the rule.
However, it seems that the underlying reason for many integrable maps (equations) to have vanishing entropy has not been focused until recently \cite{Kamp_growth, KankiExtendedHV, RobertsTran, Viallet2015}. In \cite{RobertsTran} we were able to prove  polynomial growth of a large class of lattice equations (autonomous and non-autonomous) subject to a conjecture. The key ingredient for our  approach was to find a recursive formula for the greatest common factor and then derive a linear  recurrence relation for the actual degrees. The results for lattice equations then can be applied to  mappings obtained as reductions of the  lattice equations (with some exceptions). In general, many known integrable lattice equations were shown to share the same \emph{universal} linear recurrence for the actual degrees which leads to  quadratic growth.

We note that there is a sub-class of integrable lattice equations which are linearizable, i.e. equations can be brought to linear equations or systems after some transformations cf. \cite{DiscreteLiouville, LC_linear_2011, LC_linear, Sahadevan, RJGT}).
One quick test for linearization  is using the algebraic entropy test. Linear  growth of degrees of an equation indicates that  this equation can be linearized.
 One then can use the symmetry approach given in \cite{LC_linear} to transform it to a linear equation or a system of linear equations.
 Similarly to \cite{RobertsTran}, the first question that arises here is what are the recurrence relations for the actual degrees of some known linearizable equations. If there is  such a recurrence, is it related to the recurrence that we have found previously? On the other hand, starting from the recurrence relation that shows quadratic growth, can we find some  specialization of it  that  implies linear growth? Can we then search for candidate equations that satisfy these relations? Finally, can we provide transformations to bring these candidates to linear equations?

In this paper, we will try to answer some of these questions. This paper is organised as follows.  In section 2, we briefly set up all the notations that will be used for the paper. In the next section,  we  present some recurrence relations for the actual degrees to have linear growth, i.e. $(i)H, (ii)H$ and $(iii)H$ of Theorem \ref{T:homogenous}.   We  then  show that the recursive formulas for the gcd of the discrete Liouville equation \cite{DiscreteLiouville}, Ramani-Joshi-Grammaticos-Tamizhmani equation \cite{RJGT} and the $(3,1)$ reduction of pKdV \cite{Kamp_growth} imply these recurrence relations. Finally, a search for examples of equations on quad-graphs that satisfy those relations is carried out.  This is done by constructing the recursive relations  for the gcds based on the linear recurrences given in section 3.  The Ramani-Joshi-Grammaticos-Tamizhmani equation is obtained though this search.  A transformation to bring it to a linear equation will be given following the procedure in \cite{LC_linear}.

\section{Setting}
In this section, we give a procedure to compute degrees of lattice equations defined on a square. This procedure was presented in detail in \cite{RobertsTran}.
We consider a multi-affine equation (i.e. linear on each variable) on the quad-graph
\begin{equation}
\label{E:EQ_square}
Q(u,u_1,u_2,u_{12})=0,
\end{equation}
where $u=u_{l,m}$ and subscripts $_1,\ _2$ denote the shifts in the $l$ and $m$ directions, respectively.
One can solve uniquely for each vertex of this equation. Suppose that we solve for the top right vertex $u_{12}$. By introducing $u=x/z$, we obtain the rule
at the top right vertex
\begin{align*}
x_{12}&=f^{(1)}(x,x_1,x_2,z,z_1,z_2),\\
z_{12}&=f^{(2)}(x,x_1,x_2,z,z_1,z_2).
\end{align*}
Initial values are typically  given either on  the boundary of the first quadrant `corner initial values' $I_1=\{(l,m)\in \Z\times \Z\ | \ lm=0,\  l, \ m\geq 0\}$ or on  the $(1,-1)$ staircase $I_2=\{(n,-n)\ | \ n\in \Z\}\cup \{(n+1,-n)\ | \ n\in\Z\}$. They are given in blue paths in the following figure.
\definecolor{qqqqff}{rgb}{0.,0.,1.}
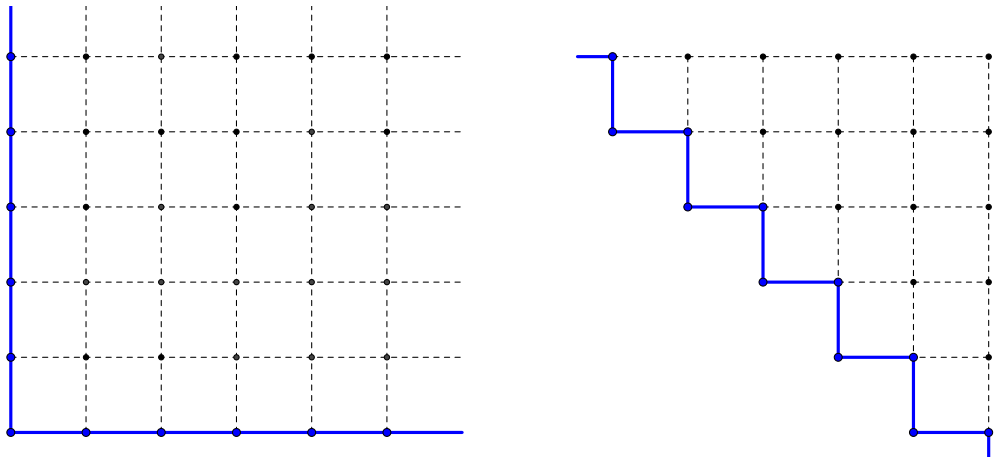
\begin{figure}[H]
\begin{center}
\definecolor{uuuuuu}{rgb}{0.26666666666666666,0.26666666666666666,0.26666666666666666}
\begin{tikzpicture}[line cap=round,line join=round,>=triangle 45,x=1.0cm,y=1.0cm]
\clip(-0.8874187062564846,-0.3177339728839405) rectangle (13.956157422319317,5.668294479833502);
\draw [line width=1.2pt,color=qqqqff] (0.,6.)-- (0.,0.);
\draw [line width=1.2pt,color=qqqqff] (0.,0.)-- (6.,0.);
\draw [line width=0.4pt,dash pattern=on 2pt off 2pt] (0.,3.)-- (6.,3.);
\draw [line width=0.4pt,dash pattern=on 2pt off 2pt] (0.,4.)-- (6.,4.);
\draw [line width=0.4pt,dash pattern=on 2pt off 2pt] (0.,5.)-- (6.,5.);
\draw [line width=0.4pt,dash pattern=on 2pt off 2pt] (1.,0.)-- (1.,6.);
\draw [line width=0.4pt,dash pattern=on 2pt off 2pt] (2.,0.)-- (2.,6.);
\draw [line width=0.4pt,dash pattern=on 2pt off 2pt] (3.,0.)-- (3.,6.);
\draw [line width=0.4pt,dash pattern=on 2pt off 2pt] (4.,0.)-- (4.,6.);
\draw [line width=0.4pt,dash pattern=on 2pt off 2pt] (5.,0.)-- (5.,6.);
\draw [line width=0.4pt,dash pattern=on 2pt off 2pt] (0.,1.)-- (6.,1.);
\draw [dash pattern=on 2pt off 2pt] (0.,2.)-- (6.,2.);
\draw [line width=1.2pt,color=qqqqff] (7.5335130905376655,5.)-- (8.,5.);
\draw [line width=1.2pt,color=qqqqff] (8.,5.)-- (8.,4.);
\draw [line width=1.2pt,color=qqqqff] (8.,4.)-- (9.,4.);
\draw [line width=1.2pt,color=qqqqff] (9.,4.)-- (9.,3.);
\draw [line width=1.2pt,color=qqqqff] (9.,3.)-- (10.,3.);
\draw [line width=1.2pt,color=qqqqff] (10.,3.)-- (10.,2.);
\draw [line width=1.2pt,color=qqqqff] (10.,2.)-- (11.,2.);
\draw [line width=1.2pt,color=qqqqff] (11.,2.)-- (11.,1.);
\draw [line width=1.2pt,color=qqqqff] (11.,1.)-- (12.,1.);
\draw [line width=1.2pt,color=qqqqff] (12.,1.)-- (12.,0.);
\draw [line width=1.2pt,color=qqqqff] (12.,0.)-- (13.,0.);
\draw [dash pattern=on 2pt off 2pt] (8.,5.)-- (9.,5.);
\draw [dash pattern=on 2pt off 2pt] (9.,5.)-- (9.,4.);
\draw [dash pattern=on 2pt off 2pt] (9.,4.)-- (10.,4.);
\draw [dash pattern=on 2pt off 2pt] (10.,4.)-- (10.,3.);
\draw [dash pattern=on 2pt off 2pt] (10.,3.)-- (11.,3.);
\draw [dash pattern=on 2pt off 2pt] (11.,3.)-- (11.,2.);
\draw [dash pattern=on 2pt off 2pt] (11.,2.)-- (12.,2.);
\draw [dash pattern=on 2pt off 2pt] (12.,2.)-- (12.,1.);
\draw [dash pattern=on 2pt off 2pt] (13.,1.)-- (12.,1.);
\draw [dash pattern=on 2pt off 2pt] (13.,1.)-- (13.,2.);
\draw [dash pattern=on 2pt off 2pt] (12.,2.)-- (13.,2.);
\draw [dash pattern=on 2pt off 2pt] (11.,3.)-- (12.,3.);
\draw [dash pattern=on 2pt off 2pt] (12.,3.)-- (12.,2.);
\draw [dash pattern=on 2pt off 2pt] (10.,4.)-- (11.,4.);
\draw [dash pattern=on 2pt off 2pt] (11.,4.)-- (11.,3.);
\draw [dash pattern=on 2pt off 2pt] (9.,5.)-- (10.,5.);
\draw [dash pattern=on 2pt off 2pt] (10.,5.)-- (10.,4.);
\draw [dash pattern=on 2pt off 2pt] (10.,5.)-- (11.,5.);
\draw [dash pattern=on 2pt off 2pt] (11.,5.)-- (11.,4.);
\draw [dash pattern=on 2pt off 2pt] (11.,4.)-- (12.,4.);
\draw [dash pattern=on 2pt off 2pt] (12.,4.)-- (12.,3.);
\draw [dash pattern=on 2pt off 2pt] (12.,3.)-- (13.,3.);
\draw [dash pattern=on 2pt off 2pt] (13.,2.)-- (13.,3.);
\draw [dash pattern=on 2pt off 2pt] (13.,4.)-- (12.,4.);
\draw [dash pattern=on 2pt off 2pt] (13.,4.)-- (13.,3.);
\draw [dash pattern=on 2pt off 2pt] (11.,5.)-- (12.,5.);
\draw [dash pattern=on 2pt off 2pt] (12.,5.)-- (12.,4.);
\draw [dash pattern=on 2pt off 2pt] (12.,5.)-- (13.,5.);
\draw [dash pattern=on 2pt off 2pt] (13.,4.)-- (13.,5.);
\draw [dash pattern=on 2pt off 2pt] (13.,0.)-- (13.,1.);
\draw [line width=1.2pt,color=qqqqff] (13.,0.)-- (13,-0.7325190330435544);
\begin{scriptsize}
\draw [fill=qqqqff] (0.,0.) circle (1.5pt);
\draw [fill=qqqqff] (0.,1.) circle (1.5pt);
\draw [fill=qqqqff] (0.,2.) circle (1.5pt);
\draw [fill=qqqqff] (0.,3.) circle (1.5pt);
\draw [fill=qqqqff] (0.,4.) circle (1.5pt);
\draw [fill=qqqqff] (0.,5.) circle (1.5pt);
\draw [fill=qqqqff] (1.,0.) circle (1.5pt);
\draw [fill=qqqqff] (2.,0.) circle (1.5pt);
\draw [fill=qqqqff] (3.,0.) circle (1.5pt);
\draw [fill=qqqqff] (4.,0.) circle (1.5pt);
\draw [fill=qqqqff] (5.,0.) circle (1.5pt);
\draw [fill=black] (1.,1.) circle (1.0pt);
\draw [fill=black] (1.,3.) circle (1.0pt);
\draw [fill=black] (1.,4.) circle (1.0pt);
\draw [fill=black] (1.,5.) circle (1.0pt);
\draw [fill=uuuuuu] (2.,5.) circle (1.0pt);
\draw [fill=black] (2.,4.) circle (1.0pt);
\draw [fill=uuuuuu] (2.,3.) circle (1.0pt);
\draw [fill=black] (2.,1.) circle (1.0pt);
\draw [fill=black] (3.,3.) circle (1.0pt);
\draw [fill=black] (3.,4.) circle (1.0pt);
\draw [fill=black] (3.,5.) circle (1.0pt);
\draw [fill=black] (4.,5.) circle (1.0pt);
\draw [fill=uuuuuu] (4.,4.) circle (1.0pt);
\draw [fill=uuuuuu] (4.,3.) circle (1.0pt);
\draw [fill=uuuuuu] (5.,3.) circle (1.0pt);
\draw [fill=black] (5.,4.) circle (1.0pt);
\draw [fill=black] (5.,5.) circle (1.0pt);
\draw [fill=uuuuuu] (1.,2.) circle (1.0pt);
\draw [fill=uuuuuu] (2.,2.) circle (1.0pt);
\draw [fill=uuuuuu] (3.,2.) circle (1.0pt);
\draw [fill=uuuuuu] (3.,1.) circle (1.0pt);
\draw [fill=uuuuuu] (4.,2.) circle (1.0pt);
\draw [fill=uuuuuu] (4.,1.) circle (1.0pt);
\draw [fill=uuuuuu] (5.,1.) circle (1.0pt);
\draw [fill=uuuuuu] (5.,2.) circle (1.0pt);
\draw [fill=qqqqff] (8.,5.) circle (1.5pt);
\draw [fill=qqqqff] (8.,4.) circle (1.5pt);
\draw [fill=qqqqff] (9.,4.) circle (1.5pt);
\draw [fill=qqqqff] (9.,3.) circle (1.5pt);
\draw [fill=qqqqff] (10.,3.) circle (1.5pt);
\draw [fill=qqqqff] (10.,2.) circle (1.5pt);
\draw [fill=qqqqff] (11.,2.) circle (1.5pt);
\draw [fill=qqqqff] (11.,1.) circle (1.5pt);
\draw [fill=qqqqff] (12.,1.) circle (1.5pt);
\draw [fill=qqqqff] (12.,0.) circle (1.5pt);
\draw [fill=qqqqff] (13.,0.) circle (1.5pt);
\draw [fill=black] (9.,5.) circle (1.0pt);
\draw [fill=black] (10.,4.) circle (1.0pt);
\draw [fill=black] (11.,3.) circle (1.0pt);
\draw [fill=black] (12.,2.) circle (1.0pt);
\draw [fill=black] (13.,1.) circle (1.0pt);
\draw [fill=black] (13.,2.) circle (1.0pt);
\draw [fill=black] (12.,3.) circle (1.0pt);
\draw [fill=black] (11.,4.) circle (1.0pt);
\draw [fill=black] (10.,5.) circle (1.0pt);
\draw [fill=black] (11.,5.) circle (1.0pt);
\draw [fill=black] (12.,4.) circle (1.0pt);
\draw [fill=black] (13.,3.) circle (1.0pt);
\draw [fill=black] (13.,4.) circle (1.0pt);
\draw [fill=black] (12.,5.) circle (1.0pt);
\draw [fill=black] (13.,5.) circle (1.0pt);
\end{scriptsize}
\end{tikzpicture}
\end{center}
\caption{Initial values $I_1$ (left) and $I_2$ (right) for lattice equations.  \label{F:initial_values}}
\end{figure}
We choose initial values for $x$ and $z$ as polynomials in $w$ of the same degree. Using the rule for the top right vertex $u_{12}$, one can find all the points on the right hand side of these boundaries as polynomials in $w$.
Let ${\rm{gcd}}_{l,m}(w)=\gcd(x_{l,m}(w),z_{l,m}(w))$ and write
\begin{equation}
x_{l,m}(w)={\rm{gcd}}_{l,m}(w)\ \bar{x}_{l,m}(w)\ \mbox{and}\ z_{l,m}(w)={\rm{gcd}}_{l,m}(w)\ \bar{z}_{l,m}(w).
\end{equation}
We denote $d_{l,m}=\deg(x_{l,m})=\deg(z_{l,m})$, $\bar{d}_{l,m}=\deg(\bar{x}_{l,m})=\deg(\bar{z}_{l,m})$ and $g_{l,m}=\deg(\gcd_{l,m})$.
It is easy to see that
\begin{equation}
\label{E:total_degree}
d_{l+1,m+1}=d_{l,m}+d_{l+1,m}+d_{l,m+1}\ \mbox{and}\ d_{l,m}=g_{l,m}+\bar{d}_{l,m}.
\end{equation}
Moreover, we also know that $\gcd_{l,m}(w)\gcd_{l+1,m}(w)\gcd_{l,m+1}(w)|\gcd_{l+1,m+1}(w)$; therefore, we introduce the spontaneous gcd
\begin{equation}
\label{E:bar_gcd_basic}
\overline{\gcd}_{l+1,m+1}=\frac{\gcd_{l+1,m+1}}{\gcd_{l,m}\gcd_{l+1,m}\gcd_{l,m+1}} \ \implies  \ \bar{g}_{l+1,m+1}:=\deg(\overline{\gcd}_{l+1,m+1})=g_{l+1,m+1}-g_{l,m}-g_{l+1,m}-g_{l,m+1}.
\end{equation}
In \cite{RobertsTran} we conjectured a recursive formula for $\gcd_{l,m}$ which led to a  recurrence relation for the actual degrees $\bar{d}_{l,m}$.
For all integrable lattice equations considered in this paper,  the recurrence relation is
\begin{equation}
\label{E:integrable_degree}
\bar{d}_{l-1,m-1}+\bar{d}_{l,m+1}+\bar{d}_{l+1,m}=\bar{d}_{l+1,m+1}+\bar{d}_{l-1,m}+\bar{d}_{l,m-1}.
\end{equation}
\begin{figure}[H]
\begin{center}
\begin{tikzpicture}[line cap=round,line join=round,>=triangle 45,x=1.5cm,y=1.5cm]
\clip(-4.12,2.66) rectangle (0.16,5.54);
\draw [line width=0.4pt,dash pattern=on 2pt off 2pt] (-3.,5.)-- (-3.,3.);
\draw [line width=0.4pt,dash pattern=on 2pt off 2pt] (-3.,3.)-- (-1.,3.);
\draw [line width=0.4pt,dash pattern=on 2pt off 2pt] (-1.,3.)-- (-1.,5.);
\draw [line width=0.4pt,dash pattern=on 2pt off 2pt] (-1.,5.)-- (-3.,5.);
\draw [line width=0.4pt,dash pattern=on 2pt off 2pt] (-2.,5.)-- (-2.,3.);
\draw [line width=0.4pt,dash pattern=on 2pt off 2pt] (-3.,4.)-- (-1.,4.);
\draw[line width=1.2pt,color=red] (-3.,4.)-- (-1.,5.);
\draw [line width=1.2pt,color=red](-1.,5.)-- (-2.,3.);
\draw [line width=1.2pt,color=red](-2.,3.)-- (-3.,4.);
\draw[line width=1.2pt,color=qqqqff] (-2.,5.)-- (-1.,4.);
\draw[line width=1.2pt,color=qqqqff] (-1.,4.)-- (-3.,3.);
\draw[line width=1.2pt,color=qqqqff] (-2.,5.)-- (-3.,3.);
%\begin{scriptsize}
\draw [fill=black] (-3.,5.) circle (1.5pt);
%\draw[color=black] (-3.,5.)  node[anchor=south] {$d_{l-1,m+1}$};
\draw [fill=black] (-3.,3.) circle (1.5pt);
%\draw[color=black] (-3,3.)  node[anchor=north] {$d_{l-1,m-1}$};
%\draw[color=blue] (-3,3.)  node[anchor=north] {$+$};
\draw [fill=black] (-1.,3.) circle (1.5pt);
%\draw[color=black] (-1.,3.)  node[anchor=north] {$d_{l+1,m-1}$};
\draw [fill=black] (-1.,5.) circle (1.5pt);
%\draw[color=black] (-1,5.) node[anchor=south] {$d_{l+1,m+1}$};
%\draw[color=red] (-1,5.) node[anchor=south] {$-$};
\draw [fill=black] (-2.,5.) circle (1.5pt);
%\draw[color=black] (-2,5.)  node[anchor=south] {$d_{l,m+1}$};
%\draw[color=blue] (-2,5.)  node[anchor=south] {$+$};
\draw [fill=black] (-2.,3.) circle (1.5pt);
%\draw[color=black] (-2,3.)  node[anchor=north] {$d_{l,m-1}$};
%\draw[color=red] (-2,3.)  node[anchor=north] {$-$};
\draw [fill=black] (-3.,4.) circle (1.5pt);
%\draw[color=black] (-3.,4.)  node[anchor=east] {$d_{l-1,m}$};
%\draw[color=red] (-3.,4.)  node[anchor=east] {$-$};
\draw [fill=black] (-1.,4.) circle (1.5pt);
%\draw[color=black] (-1,4.)  node[anchor=west] {$d_{l+1,m}$};
%\draw[color=blue] (-1,4.)  node[anchor=west] {$+$};
\end{tikzpicture}
\end{center}
\caption{Illustration of the  degree recurrence relation~\eqref{E:integrable_degree}.   \label{F:bar_deg_relation}}
\end{figure}
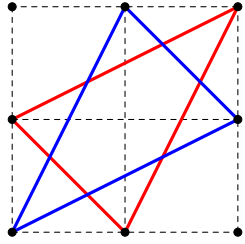
As indicated in Figure \ref{F:bar_deg_relation} the recurrence relation~\eqref{E:integrable_degree} guarantees that the sum of the degrees on the blue vertices and on  the red vertices are the same.

\section{Conditions for linear growth \label{S:conditions}}
In this section, we show that specializations of equation~\eqref{E:integrable_degree} imply linear degree growth.

The equation~\eqref{E:integrable_degree} is equivalent to any  of the rearranged equations below:
\begin{align}
\bar{d}_{l,m}-\bar{d}_{l-1,m}-\bar{d}_{l,m-1}+\bar{d}_{l-1,m-1}&=\bar{d}_{l+1,m+1}-\bar{d}_{l,m+1}-\bar{d}_{l+1,m}+\bar{d}_{l,m}, \label{E:relation1}\\
\bar{d}_{l+1,m}-\bar{d}_{l,m}-\bar{d}_{l,m-1}+\bar{d}_{l-1,m-1}&=\bar{d}_{l+1,m+1}-\bar{d}_{l,m+1}-\bar{d}_{l,m}+\bar{d}_{l-1,m}, \label{E:relation2}\\
\bar{d}_{l,m+1}-\bar{d}_{l,m}-\bar{d}_{l-1,m}+\bar{d}_{l-1,m-1}&=\bar{d}_{l+1,m+1}-\bar{d}_{l+1,m}-\bar{d}_{l,m}+\bar{d}_{l,m-1}\label{E:relation3},
\end{align}
where the RHS in each case is the shift of the LHS in the diagonal, vertical and horizontal directions, respectively.
These equations  are illustrated  in Figure~\ref{F:degeneration} where the $\pm$ denote the coefficients on either sides of each equation and again the weighted sum of the blue vertices equals to the weighted sum of the red vertices (where `weight' refers to the sign of the coefficient).
\begin{figure}[H]
\begin{center}
\definecolor{qqqff}{rgb}{0.,0.,0.}
\begin{tikzpicture}[line cap=round,line join=round,>=triangle 45,x=1.0cm,y=1.0cm]
\clip(-4.860265602465795,3.6333264313320184) rectangle (7.004653850733142,6.95820178137642);
\draw [line width=0.4pt,dash pattern=on 2pt off 2pt] (-4.,6.)-- (-4.,5.);
\draw[line width=1.2pt,color=qqqqff] (-4.,5.)-- (-4.,4.);
\draw [line width=1.2pt,color=qqqqff](-4.,4.)-- (-3.,4.);
\draw [line width=0.4pt,dash pattern=on 2pt off 2pt] (-3.,4.)-- (-2.,4.);
\draw [line width=0.4pt,dash pattern=on 2pt off 2pt] (-2.,4.)-- (-2.,5.);
\draw[line width=1.2pt,color=red] (-2.,5.)-- (-2.,6.);
\draw [line width=0.4pt,dash pattern=on 2pt off 2pt] (-4.,6.)-- (-3.,6.);
\draw[line width=1.2pt,color=red] (-3.,6.)-- (-2.,6.);
\draw[line width=1.2pt,color=qqqqff] (-4.,5.)-- (-3.,5.);
\draw [line width=1.2pt,color=qqqqff](-3.,5.)-- (-3.,4.);
\draw[line width=1.2pt,color=red] (-3.,6.)-- (-3.,5.);
\draw[line width=1.2pt,color=red] (-3.,5.)-- (-2.,5.);
\draw [line width=0.4pt,dash pattern=on 2pt off 2pt] (0.,6.)-- (0.,5.);
\draw [line width=0.4pt,dash pattern=on 2pt off 2pt] (0.,5.)-- (0.,4.);
\draw[line width=1.2pt,color=qqqqff] (0.,4.)-- (1.,4.);
\draw [line width=0.4pt,dash pattern=on 2pt off 2pt] (1.,4.)-- (2.,4.);
\draw [line width=0.4pt,dash pattern=on 2pt off 2pt] (2.,4.)-- (2.,5.);
\draw [line width=0.4pt,dash pattern=on 2pt off 2pt] (2.,5.)-- (2.,6.);
\draw [line width=1.2pt,color=red](2.,6.)-- (1.,6.);
\draw [line width=0.4pt,dash pattern=on 2pt off 2pt] (1.,6.)-- (0.,6.);
\draw [line width=1.2pt,color=red](1.,5.)-- (0.,5.);
\draw[line width=1.2pt,color=qqqqff] (1.,5.)-- (2.,5.);
\draw [line width=0.4pt,dash pattern=on 2pt off 2pt] (1.,5.)-- (1.,4.);
\draw [line width=1.2pt,color=red](0.,5.)-- (1.,6.);
\draw [line width=1.2pt,color=red](2.,6.)-- (1.,5.);
\draw [line width=1.2pt,color=qqqqff](2.,5.)-- (1.,4.);
\draw [line width=1.2pt,color=qqqqff](1.,5.)-- (0.,4.);
\draw [line width=0.4pt,dash pattern=on 2pt off 2pt] (1.,6.)-- (1.,5.);
\draw [line width=1.2pt,color=red](6.,6.)-- (6.,5.);
\draw [line width=1.2pt,color=red](5.,4.)-- (6.,5.);
\draw [line width=1.2pt,color=red](5.,4.)-- (5.,5.);
\draw [line width=1.2pt,color=red](5.,5.)-- (6.,6.);
\draw [line width=1.2pt,color=qqqqff](5.,6.)-- (5.,5.);
\draw [line width=1.2pt,color=qqqqff](5.,6.)-- (4.,5.);
\draw [line width=1.2pt,color=qqqqff](4.,4.)-- (4.,5.);
\draw [line width=1.2pt,color=qqqqff](4.,4.)-- (5.,5.);
\draw [line width=0.4pt,dash pattern=on 2pt off 2pt] (4.,4.)-- (5.,4.);
\draw [line width=0.4pt,dash pattern=on 2pt off 2pt] (5.,4.)-- (6.,4.);
\draw [line width=0.4pt,dash pattern=on 2pt off 2pt] (6.,4.)-- (6.,5.);
\draw [line width=0.4pt,dash pattern=on 2pt off 2pt] (6.,5.)-- (5.,5.);
\draw [line width=0.4pt,dash pattern=on 2pt off 2pt] (5.,5.)-- (4.,5.);
\draw [line width=0.4pt,dash pattern=on 2pt off 2pt] (4.,5.)-- (4.,6.);
\draw [line width=0.4pt,dash pattern=on 2pt off 2pt] (4.,6.)-- (5.,6.);
\draw [line width=0.4pt,dash pattern=on 2pt off 2pt] (5.,6.)-- (6.,6.);
\begin{scriptsize}
\draw [fill=qqqff] (-4.,6.) circle (1.5pt);
\draw[color=blue] (-4.,4.) node[anchor=north] {$+$};
\draw[color=blue] (-3.,4.) node[anchor=north] {$-$};
\draw[color=blue] (-4.,5.) node[anchor=north east] {$-$};
\draw[color=blue] (-3.,5.) node[anchor=north east] {$+$};
\draw[color=red] (-3.,5.) node[anchor=south west] {$+$};
\draw[color=red] (-2.,5.) node[anchor=south west] {$-$};
\draw[color=red] (-2.,6.) node[anchor=south ] {$+$};
\draw[color=red] (-3.,6.) node[anchor=south ] {$-$};
\draw[color=blue] (0.,4.) node[anchor=north] {$+$};
\draw[color=blue] (1.,4.) node[anchor=north] {$-$};
\draw[color=blue] (2.,5.) node[anchor=north west] {$+$};
\draw[color=blue] (1.,5.) node[anchor=north west] {$-$};
\draw[color=red] (1.,5.) node[anchor=south east] {$-$};
\draw[color=red] (0.,5.) node[anchor=south east] {$+$};
\draw[color=red] (2.,6.) node[anchor=south] {$+$};
\draw[color=red] (1.,6.) node[anchor=south] {$-$};
\draw[color=blue] (4.,4.) node[anchor= east] {$+$};
\draw[color=blue] (4.,5.) node[anchor= south east] {$-$};
\draw[color=blue] (5.,5.) node[anchor=south east ] {$-$};
\draw[color=blue] (5.,6.) node[anchor=south east ] {$+$};
\draw[color=red] (5.,4.) node[anchor=north ] {$+$};
\draw[color=red] (5.,5.) node[anchor=north west ] {$-$};
\draw[color=red] (6.,5.) node[anchor=north west ] {$-$};
\draw[color=red] (6.,6.) node[anchor=north west ] {$+$};
\end{scriptsize}
\draw [fill=qqqff] (-4.,5.) circle (1.5pt);
%\draw[color=qqqqff] (-3.910551655005792,5.227901027784906) node {$B$};
\draw [fill=qqqff] (-4.,4.) circle (1.5pt);
%\draw[color=qqqqff] (-3.910551655005792,4.239157740018325) node {$C$};
\draw [fill=qqqff] (-3.,4.) circle (1.5pt);
%\draw[color=qqqqff] (-2.908798587137022,4.239157740018325) node {$D$};
\draw [fill=qqqff] (-2.,4.) circle (1.5pt);
%\draw[color=qqqqff] (-1.9070455192682525,4.239157740018325) node {$E$};
\draw [fill=qqqff] (-2.,5.) circle (1.5pt);
%\draw[color=qqqqff] (-1.9070455192682525,5.227901027784906) node {$F$};
\draw [fill=qqqff] (-2.,6.) circle (1.5pt);
%\draw[color=qqqqff] (-1.9070455192682525,6.229654095653678) node {$G$};
\draw [fill=qqqff] (-3.,6.) circle (1.5pt);
%\draw[color=qqqqff] (-2.908798587137022,6.229654095653678) node {$H$};
\draw [fill=qqqff] (-3.,5.) circle (1.5pt);
%\draw[color=qqqqff] (-2.908798587137022,5.227901027784906) node {$I$};
\draw [fill=qqqff] (0.,6.) circle (1.5pt);
%\draw[color=qqqqff] (0.09646061646928761,6.229654095653678) node {$J$};
\draw [fill=qqqff] (0.,5.) circle (1.5pt);
%\draw[color=qqqqff] (0.09646061646928761,5.227901027784906) node {$K$};
\draw [fill=qqqff] (0.,4.) circle (1.5pt);
%\draw[color=qqqqff] (0.09646061646928761,4.239157740018325) node {$L$};
\draw [fill=qqqff] (1.,4.) circle (1.5pt);
%\draw[color=qqqqff] (1.0852039042358657,4.239157740018325) node {$M$};
\draw [fill=qqqff] (2.,4.) circle (1.5pt);
%\draw[color=qqqqff] (2.0869569721046357,4.239157740018325) node {$N$};
\draw [fill=qqqff] (2.,5.) circle (1.5pt);
%\draw[color=qqqqff] (2.0869569721046357,5.227901027784906) node {$O$};
\draw [fill=qqqff] (2.,6.) circle (1.5pt);
%\draw[color=qqqqff] (2.0869569721046357,6.229654095653678) node {$P$};
\draw [fill=qqqff] (1.,6.) circle (1.5pt);
%\draw[color=qqqqff] (1.0852039042358657,6.229654095653678) node {$Q$};
\draw [fill=qqqff] (1.,5.) circle (1.5pt);
%\draw[color=qqqqff] (1.0852039042358657,5.227901027784906) node {$R$};
\draw [fill=qqqff] (4.,6.) circle (1.5pt);
%\draw[color=qqqqff] (4.090463107842175,6.229654095653678) node {$S$};
\draw [fill=qqqff] (4.,5.) circle (1.5pt);
%\draw[color=qqqqff] (4.090463107842175,5.227901027784906) node {$T$};
\draw [fill=qqqff] (4.,4.) circle (1.5pt);
%\draw[color=qqqqff] (4.090463107842175,4.239157740018325) node {$U$};
\draw [fill=qqqff] (5.,4.) circle (1.5pt);
%\draw[color=qqqqff] (5.092216175710945,4.239157740018325) node {$V$};
\draw [fill=qqqff] (5.,5.) circle (1.5pt);
%\draw[color=qqqqff] (5.092216175710945,5.227901027784906) node {$W$};
\draw [fill=qqqff] (5.,6.) circle (1.5pt);
%\draw[color=qqqqff] (5.092216175710945,6.229654095653678) node {$Z$};
\draw [fill=qqqff] (6.,6.) circle (1.5pt);
%\draw[color=qqqqff] (6.119988803784099,6.268683435960253) node {$A_1$};
\draw [fill=qqqff] (6.,5.) circle (1.5pt);
%\draw[color=qqqqff] (6.119988803784099,5.266930368091481) node {$B_1$};
\draw [line width=1.2pt,color=qqqff] (6.,4.) circle (1.5pt);
%\draw[color=qqqqff] (6.119988803784099,4.2781870803249005) node {$C_1$};
%\end{scriptsize}
\end{tikzpicture}
\end{center}
\caption{Equations equivalent to equation~\eqref{E:integrable_degree} \label{F:degeneration}}
\end{figure}
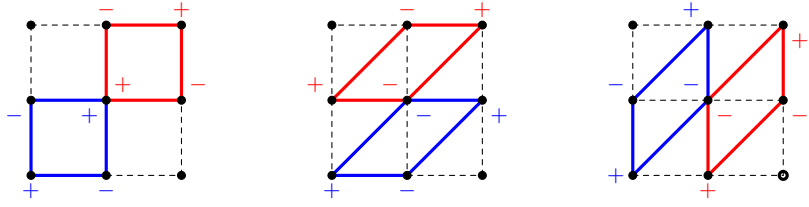
Therefore, we have Proposition~\ref{P:pro1}.
\begin{proposition}
\label{P:pro1}
Equation~\eqref{E:integrable_degree} is equivalent to any of the following equivalent statements
\begin{enumerate}[(i)]
\item $\bar{d}_{l+1,m+1}-\bar{d}_{l,m+1}-\bar{d}_{l+1,m}+\bar{d}_{l,m}=k(l-m)$,
\item $\bar{d}_{l+1,m+1}-\bar{d}_{l,m}-\bar{d}_{l,m+1}+\bar{d}_{l-1,m}=k(l)$,
\item $\bar{d}_{l+1,m+1}-\bar{d}_{l+1,m}-\bar{d}_{l,m}+\bar{d}_{l,m-1}=k(m)$,
\end{enumerate}
where $k$ is a function that only depends on $l-m$, $l$ ,$m$, respectively.
\end{proposition}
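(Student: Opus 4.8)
The plan is to read each of the rearranged identities \eqref{E:relation1}, \eqref{E:relation2}, \eqref{E:relation3} as the assertion that a fixed mixed second difference of $\bar{d}$ is invariant under one lattice shift, and then to translate such one-directional invariance into dependence on a single transverse coordinate. Since the excerpt already establishes that \eqref{E:integrable_degree} is equivalent to each of \eqref{E:relation1}--\eqref{E:relation3}, it suffices to prove that \eqref{E:relation1} $\Leftrightarrow$ (i), \eqref{E:relation2} $\Leftrightarrow$ (ii), and \eqref{E:relation3} $\Leftrightarrow$ (iii).

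First I would introduce the mixed second-difference quantity
\[
P_{l,m}:=\bar{d}_{l,m}-\bar{d}_{l-1,m}-\bar{d}_{l,m-1}+\bar{d}_{l-1,m-1},
\]
and observe that its diagonal shift is exactly the right-hand side of \eqref{E:relation1}, namely
\[
P_{l+1,m+1}=\bar{d}_{l+1,m+1}-\bar{d}_{l,m+1}-\bar{d}_{l+1,m}+\bar{d}_{l,m}.
\]
Thus \eqref{E:relation1} is precisely the assertion $P_{l,m}=P_{l+1,m+1}$. In the same way I would set
\[
Q_{l,m}:=\bar{d}_{l+1,m}-\bar{d}_{l,m}-\bar{d}_{l,m-1}+\bar{d}_{l-1,m-1},\qquad
R_{l,m}:=\bar{d}_{l,m+1}-\bar{d}_{l,m}-\bar{d}_{l-1,m}+\bar{d}_{l-1,m-1},
\]
and verify by direct substitution that $Q_{l,m+1}$ and $R_{l+1,m}$ coincide with the right-hand sides of \eqref{E:relation2} and \eqref{E:relation3}; hence those two identities say $Q_{l,m}=Q_{l,m+1}$ and $R_{l,m}=R_{l+1,m}$, respectively. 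This bookkeeping --- confirming that the right-hand side of each relation is exactly the diagonal, vertical, or horizontal shift of its left-hand side --- is the one step that must be done with care, but it is purely mechanical once the three operators are written down.

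The central observation is then the elementary fact that a lattice function invariant under a single shift is constant along the orbits of that shift, hence a function of the transverse index only. Applied to $P$: iterating $P_{l,m}=P_{l+1,m+1}$ gives $P_{l,m}=P_{l+n,m+n}$ for all integers $n$, so $P$ is constant on each diagonal $l-m=c$; defining $k(l-m)$ to be this common value and evaluating at $(l+1,m+1)$ yields statement (i). The identical argument applied to $Q$ and to $R$ forces dependence on $l$ alone and on $m$ alone, producing (ii) and (iii).

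For the converse I would simply reverse the shift. Assuming (i), i.e. $P_{l+1,m+1}=k(l-m)$, replacing $(l,m)$ by $(l-1,m-1)$ gives $P_{l,m}=k(l-m)=P_{l+1,m+1}$, which is \eqref{E:relation1} and hence \eqref{E:integrable_degree}; the same index shift recovers \eqref{E:integrable_degree} from (ii) and from (iii). The only genuine subtlety is ensuring that each shift orbit sweeps out the full admissible index range, so that ``constant along an orbit'' really produces a well-defined function $k$ of the single variable; on the domains in use (the quadrant $I_1$, or the staircase $I_2$ propagated by the evolution rule) each diagonal, column, and row is a single connected orbit, so no difficulty arises. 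I expect the verification of the three shift identities in the first step to be the main, if modest, obstacle, with everything downstream following formally.
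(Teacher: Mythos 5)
Your proposal is correct and takes essentially the same approach as the paper: the paper likewise names the mixed second difference (calling the right-hand side of~\eqref{E:relation1} $k_{l,m}$), reads the rearranged relation as invariance of that quantity under the diagonal shift, iterates to conclude that it depends on $l-m$ alone, and reverses the shift for the converse, treating (ii) and (iii) ``similarly.'' Your version merely does the bookkeeping from the left-hand side instead of the right and spells out cases (ii)--(iii) and the orbit-connectedness point explicitly, which the paper leaves implicit.
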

\begin{proof}
We prove that equation~\eqref{E:integrable_degree} is equivalent to $(i)$.
We denote the RHS of~\eqref{E:relation1} as $k_{l,m}$. We have $k_{l-1,m-1}=k_{l,m}$. It implies that
$k_{l,m}=k_{l-m,0}$ if $l\geq m$ or $k_{l,m}=k_{0,m-l}$ if $l<m$. This means that $k_{l,m}$ depends only on $l-m$  i.e. $k_{l,m}=k(l-m)$.
On the other hand, if $k_{l,m}=k(l-m)$, we have $k_{l-1,m-1}=k_{l,m}$. This is exactly equation~\eqref{E:relation1} which is equivalent to equation~\eqref{E:integrable_degree}.

Similarly, one can prove that equation~\eqref{E:integrable_degree} is equivalent to $(ii)$  and~\eqref{E:integrable_degree} is equivalent to  $(iii)$.
Hence $(i),(ii),(iii)$ are all equivalent themselves.
\end{proof}

It has been proved  in \cite{RobertsTran} that  $\bar{d}_{l,m}$ has quadratic growth along the diagonals with unit slope when corner initial values are affine  in $w$. Therefore, in general equations $(i)$, $(ii)$ and  $(iii)$ give such quadratic growth.
However, there are special cases that give us linear growth.
We note that the $(1,-1)$ staircase version (or the $I_2$ initial boundary condition) of equations~\eqref{E:integrable_degree}, $(i)$, $(ii)-(iii)$ are
\begin{align}
\bar{d}_{n+4}-2\bar{d}_{n+3}+2\bar{d}_{n+1}-\bar{d}_n&=0,\label{E:reduce1}\\
\bar{d}_{n+2}-2\bar{d}_{n+1}+\bar{d}_{n}&=k_n,\label{E:reduce2}\\
\bar{d}_{n+3}-\bar{d}_{n+2}-\bar{d}_{n+1}+\bar{d}_n&=k_{n} \label{E:reduce3},
\end{align}
where $k_n=k_{n+2}$ for the second equation and $k_n=k_{n+1}$ for the third equation.
%\end{document}

The first equation gives us the  solution $\bar{d}_n=c_1(-1)^n+c_2+c_3n+c_4n^2$
which is quadratic.
Thus, in order to get linear growth we need to have $c_4=0$. Suppose that initial values  for equation~\eqref{E:reduce1} are
 $\bar{d}_{i_0},\bar{d}_{i_0+1},\bar{d}_{i_0+2}$ and $\bar{d}_{i_0+4}$. Solving the system of linear equations for $c_i$, we get
 $c_4=0$ if and only if $\bar{d}_{i_0}+\bar{d}_{i_0+3}-\bar{d}_{i_0+1}-\bar{d}_{i_0+2}=0$, which is equation~\eqref{E:reduce3} when $k=0$.

For the last two equations if $k_n\neq 0$ then $\bar{d}_n$ has quadratic growth as $\bar{d}_{n+1}-\bar{d}_{n}$ is linear. On the other hand, if $k_n=0$, it is easy to see that $\bar{d}_n$ grows linearly as
the characteristic equations for the last two equations are $(\lambda-1)^2=0$ and $(\lambda-1)^2(\lambda+1)=0$.

Let $(i)H, (ii)H$ and $(iii)H$ be the associated homogeneous versions of equations $(i), (ii)$ and $(iii)$, respectively. 
For the corner boundary conditions $I_1$, we have the following Theorem.

\begin{theorem}
\label{T:homogenous}
\begin{enumerate}
Consider equations $(i)H, (ii)H, (iii)H$ holding for $l\geq l_0, m\geq m_0$.
\item
For  equation $(i)H$,  if initial values   $\{\bar{d}_{l_0,m}, \bar{d}_{l,m_0}, l\geq l_0, m\geq m_0\}$ are linear in $m,l$, respectively, then $\bar{d}_{l,m}$ grows linearly in both horizontal and vertical directions.
\item
For  equation $(iiH)$, if  initial values  $\{\bar{d}_{l_0-1,m},\bar{d}_{l_0,m}, \bar{d}_{l,m_0}, l\geq l_0, m\geq m_0\}$ are linear in $m,l$, respectively, then $\bar{d}_{l,m}$ grows linearly along the horizontal direction.
\item
For  equation $(iii)H$, if  initial values  $\{\bar{d}_{l_0,m}, ,\bar{d}_{l,m_0},\bar{d}_{l,m_0-1}, l\geq l_0, m\geq m_0\}$ are linear in $m,l$ respectively then $\bar{d}_{l,m}$ grows linearly along the vertical directionn.
\end{enumerate}
\end{theorem}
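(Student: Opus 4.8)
The plan is to solve each of the three homogeneous equations explicitly on the quadrant $\{l\ge l_0,\ m\ge m_0\}$ and then substitute the prescribed linear boundary data. The unifying observation is that, after an appropriate change of lattice coordinates, each of $(i)H$, $(ii)H$, $(iii)H$ asserts that a mixed second difference vanishes, so that its general solution separates as a sum of a function of one coordinate plus a function of the other. The linearity of the relevant edge data then pins down the governing summand as an affine function, and this is exactly linear growth.

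First I would treat $(i)H$. Writing $\Delta_l\bar{d}_{l,m}=\bar{d}_{l+1,m}-\bar{d}_{l,m}$ and similarly $\Delta_m$, one sees that the left-hand side of $(i)H$ is precisely the mixed difference $\Delta_l\Delta_m\bar{d}_{l,m}$, so $(i)H$ reads $\Delta_l\Delta_m\bar{d}=0$. Its solutions separate, $\bar{d}_{l,m}=f(l)+g(m)$, and a telescoping double induction shows that the unique solution with the given edge data is
\[
\bar{d}_{l,m}=\bar{d}_{l_0,m}+\bar{d}_{l,m_0}-\bar{d}_{l_0,m_0}.
\]
Since $\bar{d}_{l_0,m}$ is linear in $m$ and $\bar{d}_{l,m_0}$ is linear in $l$, the right-hand side is affine in $m$ for fixed $l$ and affine in $l$ for fixed $m$, giving linear growth in both directions.

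Next I would pass to anti-diagonal coordinates for $(ii)H$ and $(iii)H$. For $(ii)H$ I set $(p,m)=(l-m,m)$ and $D_{p,m}=\bar{d}_{l,m}$; a short computation turns $(ii)H$ into $\Delta_m D_{p,m}=\Delta_m D_{p-1,m}$, i.e. the $m$-difference of $D$ is independent of $p$, and telescoping gives $D_{p,m}=\psi(p)+\Phi(m)$, that is $\bar{d}_{l,m}=h(l-m)+g(m)$. The bottom-row data $\bar{d}_{l,m_0}=h(l-m_0)+g(m_0)$ is linear in $l$, forcing $h$ to be affine on $\{l-m_0:l\ge l_0\}$; hence for fixed $m$ the function $\bar{d}_{l,m}=g(m)+h(l-m)$ is affine in $l$, which is linear growth in the horizontal direction. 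The mirror argument for $(iii)H$ uses $(l-m,l)$ to obtain $\bar{d}_{l,m}=f(l)+h(l-m)$, and here the left-column data $\bar{d}_{l_0,m}$ linear in $m$ forces $h$ affine, so for fixed $l$ the solution is affine in $m$, i.e. linear growth in the vertical direction.

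I expect the part requiring most care to be twofold. First, verifying that the separated form is the complete solution on the quadrant: the change of variables and telescoping must be checked on the correct index ranges, and one should confirm that the prescribed data (two columns plus a row for $(ii)H$, one column plus two rows for $(iii)H$) is exactly what makes the forward recursion well-posed there; the extra column, respectively row, beyond the one that actually constrains $h$ is needed for \emph{existence} of a solution but not for the growth conclusion, and this distinction should be stated cleanly. Second, bookkeeping the domains so that the affineness of $h$ holds on the full half-line swept out as $l\to\infty$ (respectively $m\to\infty$), which is what legitimizes passing from affine boundary data to genuine linear growth. Once these domain issues are settled, the substitution steps are immediate.
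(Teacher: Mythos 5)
Your proposal is correct and essentially reproduces the paper's own proof: part 1 rests on the identical formula $\bar{d}_{l,m}=\bar{d}_{l_0,m}+\bar{d}_{l,m_0}-\bar{d}_{l_0,m_0}$, and parts 2--3 rest on the same observation the paper uses, namely that $(ii)H$ (resp.\ $(iii)H$) makes first differences invariant under the diagonal shift, which one then telescopes back to the boundary data. Your skew coordinates $(l-m,m)$ and separated solution $\bar{d}_{l,m}=h(l-m)+g(m)$ are just a repackaging of the paper's explicit telescoped formula $\bar{d}_{l_0-1+j+k,m_0+j}=\bar{d}_{l_0-1+k,m_0}+c_j$ (with $c_j$ a constant assembled from the two-column data), and both arguments isolate the same point you flag: only the bottom-row (resp.\ left-column) linearity drives the horizontal (resp.\ vertical) linear growth.
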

 \begin{proof}

1.  For the homogeneous equation $(i)$, we have $\bar{d}_{l,m}=\bar{d}_{l_0,m}+\bar{d}_{l,m_0}-\bar{d}_{l_0,m_0}$. Therefore, if $\bar{d}_{l_0,m}$ and $\bar{d}_{l,m_0}$
are linear functions in  $m$ and $l$ respectively, then $\bar{d}_{l,m}$ grows linearly.
\newline
2.  For $k\geq 1,j\geq 0$, we have
\[
\bar{d}_{l_0-1+j+k,m_0+j}-\bar{d}_{l_0-1+j+k-1,m_0+j}=\bar{d}_{l_0-1+k,m_0}-\bar{d}_{l_0-1+k-1,m_0}.
\]
Thus we have
\begin{multline*}
\bar{d}_{l_0-1+j+k,m_0+j}=\bar{d}_{l_0-1+k,m_0}-\bar{d}_{l_0-1,m_0}+(\bar{d}_{l_0,m_0+1}-\bar{d}_{l_0-1,m_0+1})+\ldots+\\
(\bar{d}_{l_0,m_0+j-1}-\bar{d}_{l_0-1,m_0+j-1})
+\bar{d}_{l_0,m_0+j}.
\end{multline*}
It shows that for each fixed $j$, $\bar{d}_{l,m_0+j}$ grows linearly along the horizontal direction.
\newline
3. For the third statement, we just need to swap $l$ and $m$ and then it becomes the second statement.
%\end{enumerate}
\end{proof}
Recall that the $(q,-p)$ reduction of a lattice equation, where $q,p$ are positive co-prime integers, gives us an ordinary difference equation of order $(p+q)$ for the variable $V_n:=u_{l,m}$, where $n=lp+mq+1$ cf. \cite{RobertsTran}. 
\begin{corollary}
For  $\gcd(q,p)=1, q,p>0$, the $(q,-p)$ reductions  of  the homogeneous equations $(i)H$, $(ii)H$ and $(iii)H$  provide linear growth of $\bar{d}_n$.
\end{corollary}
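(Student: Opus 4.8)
The plan is to perform the $(q,-p)$ reduction directly on the homogeneous degree recurrences and then analyse the resulting constant-coefficient linear recurrence via its characteristic polynomial. Under the reduction one identifies $\bar{d}_{l,m}$ with the one-index sequence $\bar{d}_n$, where $n=lp+mq+1$, because the shift $(l,m)\mapsto(l+q,m-p)$ leaves $n$ invariant. The two elementary lattice shifts act on the index by $l\mapsto l+1\ \Rightarrow\ n\mapsto n+p$ and $m\mapsto m+1\ \Rightarrow\ n\mapsto n+q$; recording these translations carefully is the first step, since everything downstream depends on them.

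Substituting into the three homogeneous equations $(i)H$, $(ii)H$ and $(iii)H$ gives, respectively,
\begin{align*}
\bar{d}_{n+p+q}-\bar{d}_{n+q}-\bar{d}_{n+p}+\bar{d}_n&=0,\\
\bar{d}_{n+p+q}-\bar{d}_{n+q}-\bar{d}_n+\bar{d}_{n-p}&=0,\\
\bar{d}_{n+p+q}-\bar{d}_{n+p}-\bar{d}_n+\bar{d}_{n-q}&=0.
\end{align*}
Setting $\bar{d}_n=\lambda^n$ (and clearing the negative powers by an index shift in the last two cases), the characteristic polynomials factor as
\[
(\lambda^p-1)(\lambda^q-1),\qquad (\lambda^p-1)(\lambda^{p+q}-1),\qquad (\lambda^q-1)(\lambda^{p+q}-1),
\]
each a product of two factors $\lambda^a-1$ with $a\in\{p,q,p+q\}$.

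The decisive step is the root analysis, which is where the coprimality hypothesis enters. The roots of $\lambda^a-1$ are the $a$-th roots of unity, all simple and unimodular. For the three index pairs $(a,b)$ arising above---namely $(p,q)$, $(p,p+q)$ and $(q,p+q)$---one has $\gcd(a,b)=\gcd(p,q)=1$, so the two factors share exactly the single common root $\lambda=1$ and no other. As $\lambda=1$ is a simple zero of each factor (the derivative $a\lambda^{a-1}$ is nonzero there), it is a zero of multiplicity exactly $2$ of the product, while every remaining root is simple and lies on the unit circle. I would flag this multiplicity bookkeeping as the main point: coprimality is exactly what prevents any root other than $1$ from being repeated and what keeps the multiplicity of $1$ equal to $2$ rather than higher.

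Consequently the general solution of each recurrence has the form
\[
\bar{d}_n=c_1+c_2\,n+\sum_{\omega}c_\omega\,\omega^{\,n},
\]
where the sum runs over the finitely many simple roots $\omega\neq1$ with $|\omega|=1$. This oscillatory sum is bounded, and the double root at $\lambda=1$ supplies the linear term $c_1+c_2n$; hence $\bar{d}_n$ grows at most linearly, which is the asserted linear growth. I expect no genuine obstacle beyond the factorisation and the coprimality-driven multiplicity count, after which the conclusion follows from the standard theory of constant-coefficient linear recurrences.
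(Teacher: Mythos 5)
Your proposal is correct and follows essentially the same route as the paper: reduce the three homogeneous relations to constant-coefficient recurrences, pass to the characteristic polynomials, and observe that $1$ is a double root while all other roots are distinct roots of unity, which forces at most linear growth. Your explicit factorisations $(\lambda^p-1)(\lambda^q-1)$, $(\lambda^p-1)(\lambda^{p+q}-1)$, $(\lambda^q-1)(\lambda^{p+q}-1)$ and the coprimality-based multiplicity count simply supply the verification that the paper leaves as an assertion.
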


 In fact, one can easily  see  this result  via characteristic equations  of these reduced equations.  The  $(q,-p)$ reductions of  homogenous equations $(i)H$, $(ii)H$ and $(iii)U$
give us  the following ordinary linear difference equations and their characteristic equations
\begin{align}
\bar{d}_{n+p+q}-\bar{d}_{n+p}-\bar{d}_{n+q}+\bar{d}_n&=0    &\implies \quad\quad\ \lambda^{p+q}-\lambda^p-\lambda^q+1=0, \label{E:qp_reduce}\\
\bar{d}_{n+2p+q}-\bar{d}_{n+p+q}-\bar{d}_{n+p}+\bar{d}_n&=0 &\implies\quad \lambda^{2p+q}-\lambda^{p+q}-\lambda^p+1=0,\label{E:qp_reduce1}\\
\bar{d}_{n+2q+p}-\bar{d}_{n+p+q}-\bar{d}_{n+q}+\bar{d}_n&=0 &\implies\quad \lambda^{p+2q}-\lambda^{p+q}-\lambda^q+1=0.\label{E:qp_reduce2}
\end{align}
Each of  the characteristic equations has $1$ as a double root and other roots which are distinct  roots of unity.
\section{Linear growth of some equations}
We now provide some examples of linearizable equations previously identified in the literature  whose actual degrees satisfy the homogeneous equations $(i)H$, and a  reduction of  $(ii)H$  from Theorem \ref{T:homogenous}.
\subsection{Liouville equation \label{SS:LV}}
In this section, we study growth of degrees    discrete Liouville equation.
 The discrete Liouville equation is given as follows
\begin{equation}
u_{l,m}u_{l+1,m+1}(u_{l+1,m}+1)(u_{l,m+1}+1)-u_{l+1,m}u_{l,m+1}=0.
\end{equation}
The discrete Liouville equation was first introduced by Adler  and Startsev \cite{DiscreteLiouville}.
It is known that this equation is Darboux integrable and linearizable. Therefore, it should have linear growth. In fact, this equation is  equivalent to
Equation 22 in \cite{HydonViallet} which has been checked to have linear growth with  staircase initial values $I_2$ of Figure~\ref{F:initial_values}.

In terms of projective coordinates, we have
\begin{align}
x_{l+1,m+1}&=x_{l+1,m}x_{l,m+1}z_{l,m},\label{E:LV_x}\\
z_{l+1,m+1}&=x_{l,m}(x_{l+1,m}+z_{l+1,m})(x_{l,m+1}+z_{l,m+1}).\label{E:LV_z}
\end{align}
On the boundary  of the first quadrant, we have $\deg(x_{l,0})=\deg(z_{l,0})=\deg_(x_{0,m})=\deg(z_{0,m})=1$.

By direct calculation,  we find that $\gcd_{2,1}=x_{1,0}$, $\gcd_{1,2}=x_{0,1}$ and $\gcd_{2,2}=x_{1,0}^2x_{0,1}^2z_{0,0}=x_{1,1}x_{1,0}x_{0,1}$.
 More generally, this holds for any $2\times 1$,  $1\times 2$
and $2\times 2$ blocks. However, if we extend to  a $2\times 3$ or a $3\times 2 $ block, we have an additional factor  $x_{1,1}+z_{1,1}$ for the top right vertex, i.e.
$(x_{1,1}+z_{1,1})$ is a divisor of $\gcd_{2,3}$ and $\gcd_{3,2}$. This implies that for $l,m\geq 3$ we get $z_{l-1,m-1}|\gcd_{l,m}$.
We also find that
\begin{align*}
{\rm{gcd}}_{3,2}&=\frac{\gcd_{2,2}\gcd_{3,1}x_{2,1}(x_{1,1}+z_{1,1})}{\gcd_{1,1}},\\
{\rm{gcd}}_{2,3}&=\frac{\gcd_{2,2}\gcd_{1,3}x_{1,2}(x_{1,1}+z_{1,1})}{\gcd_{1,1}},\\
{\rm{gcd}}_{3,3}&=\frac{\gcd_{2,3}\gcd_{3,2}x_{2,2}z_{2,2}}{\gcd_{2,2}}.
\end{align*}
Thus, for  $l,m\geq -1$, we build the following recurrence
\begin{equation}
\label{E:LV_gcd}
            G_{l+1,m+1} = \left\{
             \begin{array}{ll}
         \gcd_{l+1,m+1} & \mbox{if}\  l< 1 \ \mbox{or}\ m< 1\ \mbox{or} \ l=1, m=1 \\
        \frac{G_{l+1,m}G_{l,m+1}x_{l,m}(x_{l,m-1}+z_{l,m-1})}{G_{l,m-1}}&\mbox{if}\ l=1,m>1,\\
        \frac{G_{l+1,m}G_{l,m+1}x_{l,m}(x_{l-1,m}+z_{l-1,m})}{G_{l-1,m}}&\mbox{if}\ l>1,m=1,\\
        \frac{G_{l+1,m}G_{l,m+1}x_{l,m}z_{l,m}}{G_{l,m}}&\mbox{if}\ l>1,m>1.
        \end{array}
         \right.
\end{equation}
Taking initial values as random polynomials of degree $1$ in $w$ with integer coefficients, we have checked for $l,m\leq 12$ that
$\gcd_{l,m}=G_{l,m}$.
Taking degrees of both sides of \eqref{E:LV_gcd}, for $l,m>1$ we get
\begin{align*}
g_{l+1,m+1}&=g_{l+1,m}+g_{l,m+1}+2d_{l,m}-g_{l,m}\\
&=d_{l+1,m}+d_{l,m+1}+d_{l,m}-\bar{d}_{l+1,m}-\bar{d}_{l,m+1}+\bar{d}_{l,m}\\
&=d_{l+1,m+1}-\bar{d}_{l+1,m}-\bar{d}_{l,m+1}+\bar{d}_{l,m},
\end{align*}
where we have used \eqref{E:total_degree}.
This leads to the homogeneous equation $(i)H$ of Theorem \ref{T:homogenous}.
\begin{equation}
\label{LV_bar_degree}
\bar{d}_{l+1,m+1}=\bar{d}_{l+1,m}+\bar{d}_{l,m+1}-\bar{d}_{l,m}.
\end{equation}
Using~\eqref{E:LV_gcd} and \eqref{E:total_degree} we have $\bar{d}_{1,m}=m+2=\bar{d}_{m,1}$ for $m\geq 1$ and $\bar{d}_{2,m}=\bar{d}_{m,2}=2(m+2)$. Using the recurrence relation~\eqref{LV_bar_degree}  we get $\bar{d}_{l,m}=2(l+m)$, for $l,m\geq 2$. It means  $\bar{d}_{l,m}$ grows linearly along the diagonal, horizontal and vertical directions.
\subsection{Ramani-Joshi-Grammaticos-Tamizhmani equation (RJGT)}
The second linearizable lattice equation that we consider here is the  non-autonomous equation given in \cite{RJGT}.
This equation  is the non-autonomous version of  a CAC equation found by Hietarinta \cite{Hietarinta2004} (after a homographic transformation).
The RJGT equation is given by
\begin{equation}
\label{E:RJGT}
(u_{l+1,m+1}+r_{l,m+1})u_{l+1,m}(s_{l,m}u_{l,m}+t_{l,m})=(u_{l+1,m}+r_{l,m})u_{l,m}(s_{l,m+1}u_{l,m+1}+t_{l,m+1}),
\end{equation}
where  parameters  $r,s,t$ are free functions in $l,m$.
Using the same method described in section~\ref{SS:LV}, one finds that $\gcd_{2,1}=x_{1,0}$ and $\gcd_{1,2}=s_{{0,1}}x_{{0,1}}+t_{{0,1}}z_{{0,1}}$.
Therefore, we can build the recurrence
\begin{equation}
\label{E:RJGT_gcd}
G_{l+1,m+1}=\frac{G_{l+1,m}G_{l,m+1}x_{l,m}(s_{l,m}x_{l,m}+t_{l,m}z_{l,m})}{G_{l,m}},
\end{equation}
 for $l,m\geq 2$.
 If $l=0$ or $m=0$ or $(l,m)=(1,1)$, we take $G_{l,m}=1$.
  If $l=1, m>1$ we take $G_{l,m}=G_{l,m-1}(s_{l-1,m-1}x_{l-1,m-1}+t_{l-1,m-1}z_{l-1,m-1})$.
  If $l>1, m=1$, we take $G_{l,m}=G_{l-1,m}x_{l-1,m-1}$.
  The degree relation associated with \eqref{E:RJGT_gcd} the same as equation~\eqref{LV_bar_degree}.
 We have checked that with random integers for $r, s, t$ at each edges and random initial values as polynomials of degree $1$ in $w$ that
    $G_{l,m}=\gcd_{l,m}$ (up to a constant factor) for $l,m\leq 12$.
Suppose this holds for all $l,m>1$, it is easy to see that $\bar{d}_{l,m}=l+m+1$ for $l,m\geq 1$. Hence, $\bar{d}_{l,m}$ grows linearly.
\begin{remark}
\label{R:QRT_constant_degrees}
The  following QRT-type equations  cf.\cite{LC_linear_2011,Sahadevan}
\begin{align*}
u_{12}&=\frac{u_1+u_2-(1-u_1u_2)u}{1-u_1u_2+(u_1+u_2)u},\\
u_{12}&=\frac{u_1-u_2+(1+u-1u_2)u}{1+u_1u_2-(u_1-u_2)u}
\end{align*}
also behave similarly as in \eqref{E:RJGT_gcd} by replacing $x_{l,m}(s_{l,m}x_{l,m}+t_{l,m}z_{l,m})$ with $x^2_{l,m}+z^2_{l,m}$.
In fact, except for the boundary condition, these equations have constant degree $\bar{d}_{l,m}=3$ for $l,m\geq 1$.
By using the transformation $v=\arctan(u)$, these equations can be brought to the following  linear equations
\begin{align*}
v+v_1-v_2-v_{12}=p\pi,\\
v-v_1-v_2+v_{12}=p\pi,
\end{align*}
where $p\in\Z$.
\end{remark}
\begin{remark}
In \cite{solvable_chaos} the authors gave the following example which is chaotic and linearizable (under the same transformation above)
\begin{equation}
\label{E:QRT_type_reduction}
x_{n+1}=\frac{3x_n-x_n^3-x_{n-1}(1-3x_n^2)}{1-3x_n^2+(2x_n-x_n^3)x_{n-1}}.
\end{equation}
In fact it can be seen as the  $(1,-1)$ reduction $u_1=u_2$ of the associated lattice equation derived from  the linear equation $ v+av_1+bv_2+cv_{12}=p\pi$,
where $p\in\Z$ and $a=-1,b=-2,c=1$ or $a=-2,b=-1,c=1$ via the transformation $u=\tan(v)$.
We also note that the other two QRT type equations in \cite{LC_linear_2011} derived from the cases where $a=-3,b=3,c=1$ and  $a=-2, b=1,c=1$ by the same
 transformation given in  Remark \ref{R:QRT_constant_degrees}  have exponential growth. This shows that  algebraic entropy is not preserved under non-rational transformation.
\end{remark}

\subsection{The $(3,1)$-reduction of $H_1$}
The equation obtained from the $(3,1)$-reduction is given by
\begin{equation}
\label{E:3_1_reduction}
(u_{n+4}-u_n)(u_{n+3}-u_{n+1})=\alpha.
\end{equation}
This gives us $u_{n+4}=u_n+\alpha/(u_{n+3}-u_{n+1})$.
This reduction is an exceptional case which was shown to be linearizable  \cite{Kamp_growth}.

We can introduce projective coordinates $u_n=x_n/z_n$, and obtain the followings rules
\begin{align}
x_{n+4}&=-\alpha\,z_{{n+3}}z_{{n+1}}z_{{n}}+x_{{n}}x_{{n+1}}z_{{n+3}}-x_{{n}}x_
{{n+3}}z_{{n+1}},\label{E:3_1_x}\\
z_{n+4}&=z_{{n}} \left( x_{{n+1}}z_{{n+3}}-x_{{n+3}}z_{{n+1}} \right)
\label{E:3_1_z}.
\end{align}
We start with $(x_i,z_i)$ for $1\leq i\leq 4$ as initial values. By using these rules, we can calculate $(x_n,z_n)$ for $n\geq 5$  as functions of  initial values.
We can easily see that they are polynomials in $x_1,x_2,x_3,x_4,z_1,z_2,z_3,z_4$.
We assume that ${\rm{deg}}(x_i)={\rm{deg}}(z_i)=1$ for $1\leq i\leq 4$.
Denote $d_n={\rm{deg}}(x_n)={\rm{deg}}(z_n)$.

 We have $d_{n+4}=d_n+d_{n+1}+d_{n+3}$.
By direct calculation, we find that $\gcd(x_8,z_8)=x_2z_4-x_4z_2$ and $(x_2z_4-x_4x_2)^2(x_5z_3-z_5x_3)| \gcd(x_9,z_9)$.
Therefore, we denote $\gcd_n=\gcd(x_n,z_n)$ and we write $x_n=\gcd_n \bar{x}_n$ and $z_n=\gcd_n \bar{z}_n$.
Let $\bar{d}_n=\deg\bar{x}_n=\deg\bar{z}_n$ and $g_n=\deg\gcd_n$.
We know that $\gcd_n \gcd_{n+1}\gcd_{n+3}|\gcd_{n+4}$. Denote
$\overline{\gcd}_{n+4}=\gcd_{n+4}/(\gcd_n \gcd_{n+1}\gcd_{n+3})$ and
$A_n=x_{n-5}z_{n-7}-z_{n-5}x_{n-7}$.
For $n\geq 5$, we have
$A_{n+4}^2 A_{n+5}|\gcd_{n+4}$.
We can see that $\gcd(A_{n+4}^2A_{n+5},\gcd_n \gcd_{n+1}\gcd_{n+3})=g_{n-1}g_{n-3}g_{n}g_{n-2}  A_{n+4}B_{n+4}$,
Therefore for $n\geq 5$, we have
\begin{align*}
\frac{A_{n+4}^2A_{n+5}\gcd_n \gcd_{n+1}\gcd_{n+3}}{\gcd(A_{n+4}^2A_{n+5},\gcd_n \gcd_{n+1}\gcd_{n+3}}
&=\frac{A_{n+4}A_{n+5}\gcd_{n+3}\gcd_{n}\overline{\gcd}_{n+1}}{\gcd_{n-1}B_{n+4}}.
\end{align*}
It suggests that we should try the following recursive formula
\begin{equation}
\label{E:3_1_relation}
G_{n+4}=\frac{A_{n+4}A_{n+5}G_{n+3}G_{n}}{G_{n-1}}=\frac{(x_{n-1}z_{n-3}-z_{n-1}x_{n-3})(x_nz_{n-2}-z_nx_{n-2})G_{n+3}G_{n}}{G_{n-1}},
\end{equation}
for $n>4$ and $G_n=\gcd_n$ for $n\leq 8$.

We now take initial values as random polynomials of degree $1$ in $w$. We have checked for $n\leq 40$ that $G_n=\gcd_{n}$ (up to a constant factor).
Thus, we conjecture that $\gcd_n=G_n$. Taking degrees of both sides of \eqref{E:3_1_relation}, we obtain
\begin{align*}
g_{n+4}&=d_{n-1}+d_{n-3}+d_n+d_{n-2}+g_{n+3}+g_{n}-g{n-1}\\
&=\bar{d}_{n-1}+d_{n+1}+d_{n+3}+d_n-\bar{d}_n-\bar{d}_{n+3}\\
&=\bar{d}_{n-1}+d_{n+4}-\bar{d}_n-\bar{d}_{n+3}.
\end{align*}
This implies that
\begin{equation}
\label{E:3_1_bar_relation}
\bar{d}_{n+4}=\bar{d}_n+\bar{d}_{n+3}-\bar{d}_{n-1},
\end{equation}
which is equation~\eqref{E:qp_reduce1} with $(q,p)=(3,1)$, i.e. a reduction of $(ii)H$.
The characteristic equation for this linear equation is
\begin{equation}
\lambda^5-\lambda^4-\lambda+1=(\lambda-1)^2(\lambda-1)(\lambda^2+1)=0.
\end{equation}
This equation has the following roots: $1$ (double root), $-1,i, -i$. This means $\bar{d}_n$ grows linearly for $n\geq 5$.
On the other hand, we know that $\bar{d}_i=1$ for $1\leq i\leq 4$, and $\bar{d}_i=2i-7$ for $i=5\leq i\leq 8$. It is easy to prove that $\bar{d}_n=2n-7$
for $n\geq 5$. It again confirms that the  sequence $\bar{d}_n$  has linear growth.

\section{Searching for lattice equations with linear growth}
In this section, we use the homogeneous equations  given in Theorem~\ref{T:homogenous} to search for  examples of lattice equations with linear growth.
We then confirm they are linearizable.
We start with a general form of a multi-affine equation on quad-graphs
\begin{multline}
Q:a_{{15}}uu_{{1}}u_{{2}}u_{{12}}+a_{{11}}uu_{{1}}u_{{2}}+a_{{12}}uu_{{1
}}u_{{12}}+a_{{13}}uu_{{2}}u_{{12}}+a_{{14}}u_{{1}}u_{{2}}u_{{12}}+a_{
{5}}uu_{{1}}+a_{{6}}uu_{{2}}+a_{{7}}uu_{{12}}+\\
a_{{8}}u_{{1}}u_{{2}}
+a_
{{9}}u_{{1}}u_{{12}}+a_{{10}}u_{{2}}u_{{12}}+a_{{1}}u+a_{{2}}u_{{1}}+a
_{{3}}u_{{2}}+a_{{4}}u_{{12}}+a_{{0}}=0.\qquad
\end{multline}
As the simplest case, we can search for  autonomous equations that satisfy the relation~$(i)H$.
Going backwards from this equation, we obtain
\[
g_{l+1,m+1}=g_{l+1,m}+g_{l,m+1}+2d_{l,m}-g_{l,m}.
\]
This suggests that
\begin{equation}
\label{E:backward1}
{\rm{gcd}}_{l+1,m+1}=\frac{\gcd_{l+1,m}\gcd_{l,m+1}A_{l+1,m+1}}{\gcd_{l,m}},
\end{equation}
where $\deg(A_{l+1,m+1})=2d_{l,m}$. The obvious choice is
$A_{l+1,m+1}=f^{(1)}x_{l,m}^2+f^{(2)} x_{l,m}z_{l,m}+f^{(3)} z_{l,m}^2=(t_1 x_{l,m}+t_2z_{l,m})(s_1x_{l,m}+s_2z_{l,m}).$
We also assume that the $\gcd$ first appears at $(2,1)$ and $(1,2)$ and $t_1=s_1=1$ if $t_1,s_1\neq 0$.
The search algorithm can be broken down to the following steps.
\begin{enumerate}
\item Write the rule in projective coordinates and calculate  $x$ and $z$ at vertices $(2,1)$ and $(1,2)$ as polynomials in initial values.
\item At the point $(2,1)$ and $(1,2)$, substitute $x_{1,0}=-t_2z_{1,0}/t_1$ and  $x_{0,1}=-s_2z_{1,0}/s_1$ into $x$ and $z$ and collect all the coefficients.
\item Set these coefficients to $0$ and solve for $a_0,a_1,\ldots, a_{15}$.
\item Substitute solutions back to $Q$ and eliminate equations that are degenerate.
\item For the `survival equations', check the recurrence relation~\eqref{E:backward1}.
\end{enumerate}
 For example, we obtain the following equation which grows linearly
 \begin{multline*}
 Q_7:\ u_{{1}}u_{{2}}u{a_{{11}}}^{2}+u_{{1}}uu_{{12}}a_{{11}}a_{{12}}+
 \left( {a_{{11}}}^{2}s_{{2}}+a_{{6}}a_{{12}} \right) u_{{1}}u+u_{{2}}
ua_{{6}}a_{{11}}+ua_{{11}}a_{{12}}t_{{2}}u_{{12}}+
\\
 \left( a_{{6}}a_{{
11}}s_{{2}}+a_{{6}}a_{{12}}t_{{2}} \right) u+
{a_{{11}}}^{2}t_{{2}}u_{{
1}}u_{{2}}+a_{{11}}a_{{12}}s_{{2}}u_{{1}}u_{{12}}+ \left( {a_{{11}}}^{
2}s_{{2}}t_{{2}}+a_{{6}}a_{{12}}s_{{2}} \right) u_{{1}}+
\\
a_{{6}}a_{{11}
}t_{{2}}u_{{2}}+a_{{11}}a_{{12}}s_{{2}}t_{{2}}u_{{12}}+a_{{6}}a_{{11}}
s_{{2}}t_{{2}}+a_{{6}}a_{{12}}s_{{2}}t_{{2}}=0,\qquad\qquad
 \end{multline*}
 where $u=u_{l,m}, u_1=u_{l+1,m}, u_2=u_{l,m+1}, u_{12}=u_{l+1,m+1}$.
 This equation can be written as
 \begin{equation}
 \label{E:Q7}
 a_{11}(u+t_2)(u_2+s_2)(a_{11}u_{1}+a_6)=-a_{12}(u_1+t_2)(u+s_2)(a_{11}u_{12}+a_6).
 \end{equation}
 If $a_{11}=-a_{12}$, this equation is an autonomous version of equation (13) in \cite{RJGT} which is linearizable.
 We write equation~\eqref{E:Q7} as follows
 \begin{equation}
 \frac{(a_{11}u_{12}+a_6)}{(a_{11}u_{1}+a_6)}\frac{(u+s_2)}{(u_2+s_2)}=-\frac{a_{11}(u+t_2)}{a_{12}(u_1+t_2)}.
 \end{equation}
    This equation can reduce to the discrete Burgers equation \cite{LC_linear} by first taking $a_{12}=-a_{11}, a_6=pa_{11}, t_2=0, s_2=a_{11}$ and
 taking the limit as $a_{11}\rightarrow \infty$.

 By introducing $u=x_2/x-t_2$, we obtain the following equation
 \[
 \frac{S_2(f(x,x_1,x_2,x_12))}{f(x,x_1,x_2,x_{12}})=\frac{-a_{11}}{a_{12}},
 \]
 where $S_2$ denotes the shift in the second direction and
 \begin{equation*}
 f(x,x_1,x_2,x_{12})=\frac{(a_{11}u_1+a_6)x_1}{(u+s_2)x}
 =\frac{a_{11}x_{12}-t_2x_1+a_6x_1}{x_2-t_2x+s_2x}.
 \end{equation*}
 It implies that
 $$
 f(x,x_1,x_2,x_{12}=c_0\left(\frac{-a_{11}}{a_{12}}\right)^m.
 $$
 This equation can be written in a linear form as follows
 \[
 a_{11}x_{12}-t_2x_1+a_6x_1=c_0\left(\frac{-a_{11}}{a_{12}}\right)^m\left(x_2-t_2x+s_2x\right).
 \]

 We note that this linearization process also holds for an autonomous version of~\eqref{E:Q7} if $s_2(l,m)$, ${a_6}(l,m)/{a_{11}}(l,m)$ depend on $m$ only and $a_{11}/a_{12}=-k$  and $t_2$ are constant. This non-autonomous equation has the form:
  \begin{equation}
 \label{E:non_Q7}
 \frac{({a_{11}}(l,m+1)u_{12}+{a_6}(l,m+1))}{({a_{11}}(l,m)u_{1}+{a_6}_(l,m))}\frac{(u+{s_2}(l,m))}{(u_2+{s_2}(l,m+1))}=\frac{k(u+t_2)}{(u_1+t_2)}.
 \end{equation}
 This equation is actually equation~\eqref{E:RJGT} if $k=1$.

 We have also found that equation~\eqref{E:non_Q7} does not satisfy condition 60a, 60b, 60c in \cite{LC_linear}, so it
 cannot be linearized by an invertible point transformation. However, as we have seen above, it is linearized by the
 Cole-Hopf transformation. This equation can be linearized by writing it as a system of two equations.

 For the case $k=1$, equation~\eqref{E:non_Q7} can be written as
\begin{equation}
\label{E:Q7special}
u(u_2+s_2)(u_1+t)=u_1(u+s)(u_{12}+t_2),
\end{equation}
where $s_2=s_{l,m+1},\  t_2=t_{l,m+1}$. We follow the method given in \cite{LC_linear} to linearize this equation.
We introduce a potential function $v$ that satisfies the following
\begin{equation}
\label{E:potential}
v_{l,m+1}=u_{l,m}v_{l,m}=\mathcal{E}^{(2)}_{l,m}v_{l,m},\
v_{l+1,m}=\frac{(u_{l,m}+s_{l,m})}{(u_{l+1,m}+t_{l,m})}v_{l,m}=\mathcal{E}^{(1)}_{l,m}v_{l,m}.
\end{equation}
The compatibility of these two equations gives us equation~\eqref{E:Q7}
\begin{equation}
\frac{\mathcal{E}^{(1)}_{l,m+1}}{\mathcal{E}^{(1)}_{l,m}}-\frac{\mathcal{E}^{(2)}_{l+1,m}}{\mathcal{E}^{(2)}_{l,m}}=0.
\end{equation}
We look for a symmetry generator that has the form
\begin{equation}
\hat{X}_{l,m}=\phi_{l,m}(u_{l,m},v_{l,m})\partial_{u_{l,m}}+\psi_{l,m}(u_{l,m},v_{l,m})\partial_{v_{l,m}}.
\end{equation}
For simplicity, we write $(l,m)$ as $(0,0)$. Thus, we obtain the following equation
\begin{align}
\psi_{0,1}(u_{0,1},v_{0,1})&=v_{0,0}\phi_{0,0}(u_{0,0},v_{0,0})+\psi_{0,0}(u_{0,0},v_{0,0})u_{0,0},\label{E:condition1}\\
\psi_{1,0}&=\frac{\phi_{0,0}v_{0,0}}{u_{1,0}+t_{0,0}}-\frac{\phi_{1,0}v_{0,0}(u_{0,0}+s_{0,0})}{(u_{1,0}+t_{0,0})^2}+\frac{\psi_{0,0}(u_{0,0}+s_{0,0})}{u_{1,0}+t_{0,0}}.\label{E:condition2}
\end{align}
Since $v_{0,1}=u_{0,0} v_{0,0}$, the first equation implies that $\psi$ only depends on the second variable, i.e. $\psi_{0,0}(u_{0,0},v_{0,0})=\psi_{0,0}(v_{0,0})$.
Therefore, we have
\begin{equation}
\phi_{0,0}=\frac{\psi(v_{0,1})-u_{0,0}\psi_{0,0}(v_{0,0})}{v_{0,0}}.
\end{equation}
Using this formula, we substitute $\phi_{0,0}$, $\phi_{1,0}$ and $v_{1,0}=\mathcal{E}^{(1)}_{0,0}v_{0,0}$ into equation~\eqref{E:condition2} to obtain
\begin{equation}
\label{E:linear}
s_{0,0}\psi_{0,0}-t_{0,0}\psi_{1,0}+\psi_{0,1}-\psi_{1,1}=0.
\end{equation}
Therefore, we take $\psi_{0,0}=w_{0,0}$, and
$\phi_{0,0}=(w_{0,1}-u_{0,0}w_{0,0})/v_{0,0}$ where $w_{l,m}$ satisfies the linear equation~\eqref{E:linear}.
Thus, it suggests we take ${\bf w}=(w_{0,0},w_{0,1})$.
Using Theorem 6 in \cite{LC_linear}, we have $\gamma^{(1)}=1, \gamma^{(2)}=0,\beta^{(1)}=-u_{0,0}/v_{0,0},\beta^{(2)}=1/v_{0,0}$.
Substituting in equation~(38) in \cite{LC_linear}, we obtain the transformation
\[
z^{(1)}_{0,0}=K^{(1)}_{0,0}(u_{0,0},v_{0,0}),\ z^{(2)}_{0,0}=K^{(2)}_{0,0}(u_{0,0},v_{0,0}),
\]
where $K^{(1)}=v_{0,0}$ and $K^{(2)}=v_{0,0}u_{0,0}$. This gives us the Cole-Hofp transformation
$u_{l,m}=z^{(1)}_{l,m+1}/z^{(1)}_{l,m}$ which linearizes the original quad-graph equation.
It is easy to check that by using the transformation $z^{(1)}_{l,m}=v_{l,m}$ and $z^{(2)}_{l,m}=v_{l,m}u_{l,m}$, we bring a system of equations~\eqref{E:potential}
to a linear equation
\begin{equation}
L_{l,m}: s_{l,m}w-t_{l,m}w_1+w_2-w_{12}=0,
\end{equation}
where $z^{(1)}_{l,m}$ satisfies $L_{l,m}=0$ and $z^{(2)}_{l,m}$ satisfies $L_{l,m+1}=0$, i.e. we shift parameters in the  second direction by $1$.

 For the relation~$(ii)H$, we get
 \begin{align*}
 g_{l+1,m+1}&=g_{l,m}+g_{l,m+1}+g_{l,m+1}+\bar{d}_{l-1,m}+\bar{d}_{l+1,m}\\
 &=g_{l,m}+g_{l,m+1}+d_{l+1,m}+d_{l-1,m}-g_{l-1,m}.
 \end{align*}
 We can use the  similar argument for relation~$(iii)H$. Thus, we  can take respectively for $(ii)H$ and $(iii)H$:
 \begin{align*}
 {\rm{gcd}}_{l+1,m+1}&=\frac{\gcd_{l,m+1}gcd_{l,m}A_{l+1,m+1}}{\gcd_{l-1,m}},\\
 {\rm{gcd}}_{l+1,m+1}&=\frac{\gcd_{l+1,m}gcd_{l,m}A_{l+1,m+1}}{\gcd_{l,m-1}},
 \end{align*}
 where
 \begin{align*}
 \deg(A_{l+1,m+1})&=d_{l+1,m}+d_{l-1,m}=(d_{l-1,m}+d_{l,m-1})+(d_{l,m}+d_{l+1,m-1}),\\
 \deg(A_{l+1,m+1})&=d_{l,m+1}+d_{l,m-1}=(d_{l,m-1}+d_{l-1,m})+(d_{l,m}+d_{l-1,m+1}).
 \end{align*}
 For the former case, one can try $A_{l+1,m+1}=B_{l,m+1}B_{l+1,m+1}$, where $\deg(B_{l+1,m+1})=(d_{l,m}+d_{l+1,m-1})$.
 For the latter case, one can try $A_{l+1,m+1}=B_{l+1,m}B_{l+1,m+1}$, where $\deg(B_{l+1,m+1})=(d_{l,m}+d_{l-1,m+1})$.
 The search for equations that behave similarly to the $(3,1)$  reduction of $H_1$ or the former case does not give non-degenerate equations.

 \section{Conclusion}
 In this paper, based on the recurrence relation~\eqref{E:integrable_degree} for the degrees of many integrable lattice equations~\cite{RobertsTran}, we derived some linear recurrences $(i)H, (ii)H$ and $(ii)H$ of Theorem~\ref{T:homogenous} that imply linear degree growth.
 We then used these recurrences to build recursive formulas for the gcds. Thus, we were able to search for examples of linearizable equations with certain gcd patterns, for example
 the RJGT equation and discrete Burgers equation. A symmetry method  given in \cite{LC_linear} was used to linearize the RJGT equation.
 Moreover, we have also noted that there are linearizable equations with exponential growth \cite{solvable_chaos, Sahadevan}.  This  is because the transformations to bring these equations to linear equations are not rational.

 We have also noticed that some other linearizable equations such as  Equation 15 in \cite{HV} where $p_6=0$, another form of  Liouville equation \cite{LC_linear_2011} and Equation 39   with conditions (40) in \cite{LC_linear} satisfy  the recurrence~\eqref{E:backward1} and hence satisfy $(i)H$. Moreover, we have not found a lattice equation whose gcd relations give us  the homogeneous equations $(ii)H$ or $(iii)H$ of Theorem \ref{T:homogenous} directly. Thus, it leads to a question whether all lattice equations which are linearized via a rational transformation obey the recurrence~$(i)H$.
It is also worth studying why those equations  would share one and the same linear recurrence.

\section*{Acknowledgements}
This research is supported by the  Australian Research Council.

\end{document}